\pdfoutput=1 
\documentclass[letterpaper, 10 pt, conference, onecolumn]{ieeeconf} 

\IEEEoverridecommandlockouts                              
\usepackage{amsmath}
\usepackage{xcolor}
\usepackage{amssymb}
\usepackage{algorithm}
\usepackage{graphicx}
\usepackage{algpseudocode}
\usepackage{cite} % Recommended for handling multiple citations

\usepackage{amsthm}   % Now safe to load amsthm

\theoremstyle{definition} % Now this command will work
\newtheorem{definition}{Definition}
\newtheorem{remark}{Remark}
\newtheorem{lemma}{Lemma}
\newtheorem{assumption}{Assumption}
\newtheorem{problem}{Problem}

\theoremstyle{plain}      % Switch back to italics for theorems
\newtheorem{theorem}{Theorem}

\title{\LARGE \bf
Willems' Fundamental Lemma with Large Noisy Fragmented Dataset
}

\author{Sahand Kiani and Constantino M. Lagoa% <-this % stops a space
\thanks{The authors are with the Department of Electrical and Computer Engineering, Pennsylvania State University, State College, PA 16801, USA. S. Kiani is a PhD student (e-mail: szk6437@psu.edu). C. M. Lagoa is a full professor (e-mail: cml18@psu.edu).}%
}

\begin{document}

\maketitle
\thispagestyle{empty}
\pagestyle{empty}

%%%%%%%%%%%%%%%%%%%%%%%%%%%%%%%%%%%%%%%%%%%%%%%%%%%%%%%%%%%%%%%%%%%%%%%%%%%%%%%%
\begin{abstract}
Willems' Fundamental Lemma enables parameterizing all trajectories generated by a Linear Time-Invariant (LTI) system directly from data. However, this lemma relies on the assumption of noiseless measurements. In this paper, we provide an approach that enables the applicability of Willems' Fundamental Lemma with a large noisy-input, noisy-output fragmented dataset, without requiring prior knowledge of the noise distribution. We introduce a computationally tractable and lightweight algorithm that, despite processing a large dataset, executes in the order of seconds to estimate the invariants of the underlying system, which is obscured by noise. The simulation results demonstrate the effectiveness of the proposed method.
\end{abstract}

\begin{keywords}
Non-Parametric Representation, Willems' Fundamental Lemma, Noisy Dataset
\end{keywords}

%%%%%%%%%%%%%%%%%%%%%%%%%%%%%%%%%%%%%%%%%%%%%%%%%%%%%%%%%%%%%%%%%%%%%%%%%%%%%%%%
\section{Introduction}
Data-driven control has emerged as a powerful paradigm for analyzing system properties and designing controllers directly from collected data, bypassing the need for explicit analytical models \cite{de2019formulas}\cite{zheng2025robust}. Central to many of these methodologies is Willems' Fundamental Lemma \cite{willems2005note}\cite{9062331}, which establishes that a persistently exciting trajectory can span all possible finite-horizon trajectories generated by the LTI system. This lemma assumes that the data must be free of noise. To overcome this limitation, scholars have proposed various regularization techniques, as reviewed below.

In behavioral systems theory, measurement noise is traditionally mitigated through structured low-rank approximations or matrix regularization \cite{markovsky2021behavioral}\cite{willems1997introduction}. Alternatively, robust data-driven predictive control frameworks often avoid statistical assumptions entirely by treating bounded additive noise as a system uncertainty. This is achieved by mapping the noise to the multiplicative uncertainty of the model \cite{berberich2020data}, or formulating min-max optimizations to ensure the resilience of the controller against worst-case disturbances \cite{huang2023robust}. Although these robust formulations successfully stabilize the closed-loop system, they are conservative and rely on worst-case relaxations. Finally, to specifically address simultaneous input-output measurement noise—often referred to as the errors-in-variables problem—recent methodologies construct auxiliary state-space representations from time-shifted data. For example, the authors in \cite{li2026controller} design dynamic output-feedback controllers by formulating robust stabilization problems over these auxiliary systems, guaranteeing stability for all possible dynamics consistent with an assumed noise bound.

In contrast to previous approaches, this work leverages the properties of large datasets to recover the system invariants directly from noisy measurements. By utilizing an ensemble of multiple experiments, we secure the data realizations necessary to decouple the stochastic noise and extract the underlying system invariants.

\subsection{Main Contribution}
The primary contribution of this paper is an algorithm that enables the applicability of Willems' Fundamental Lemma when a large noisy fragmented dataset is available, without requiring prior knowledge of the noise distributions. Crucially, because this framework processes data across an ensemble rather than relying on a single, long continuous trajectory, it easily accommodates fragmented data. This includes handling missing data segments occurring in the middle of a recording session, as well as arbitrary temporal gaps between completely distinct trials, regardless of their varying initial conditions. Our approach begins with an analysis of the left null space of stacked input-output Hankel matrices within a multi-experiment setting. Building upon this, we demonstrate that by leveraging the Law of Large Numbers (LLN) across a sufficient number of experiments, the underlying system invariants can be estimated purely from noisy data. Finally, by formulating an optimization problem that can be solved using Singular Value Decomposition (SVD), the proposed algorithm estimates the invariants of the system. Furthermore, although our framework relies on substantially larger datasets compared to traditional approaches in systems theory, we demonstrate that the computational complexity of the proposed invariant recovery algorithm remains remarkably lightweight.

\textbf{Notation:} For a matrix $X \in \mathbb{R}^{p \times q}$, $\text{Im}(X)$ denotes its image (or column space), defined as $\text{Im}(X) := \{ y \in \mathbb{R}^p \mid \exists x \in \mathbb{R}^q, y = Xx \}$. The left kernel (or left null space) of $X$ is denoted by $\mathcal{N}(X)$ and is defined as $\mathcal{N}(X) := \{ v \in \mathbb{R}^p \mid v^\top X = \mathbf{0}^\top \}$. The element in the $r$-th row and $s$-th column of a matrix $X$ is denoted by $X_{(r,s)}$, and the subscript notation $X_{[r, j:k]}$ is used to denote the row vector extracted from the $r$-th row of $X$ spanning columns $j$ through $k$. The vector of all ones of size $n$ is denoted by $\mathbf{1}_n \in \mathbb{R}^n$. Given a sequence of data $\{\zeta_k\}_{k=0}^{N-1}$ and a chosen depth $L$, we define the Hankel matrix $H_{L}(\zeta)$ as:

\begin{equation}
H_{L}(\zeta) := \begin{bmatrix}
\zeta_{0} & \zeta_{1} & \cdots & \zeta_{N-L} \\
\zeta_{1} & \zeta_{2} & \cdots & \zeta_{N-L+1} \\
\vdots & \vdots & \ddots & \vdots \\
\zeta_{L-1} & \zeta_{L} & \cdots & \zeta_{N-1}
\end{bmatrix}.
\end{equation}

\section{Willems' Fundamental Lemma in the Multi-Experiment Setting}\label{section2}

Consider the following discrete-time LTI system:
\begin{equation}
\begin{aligned}
x_{k+1} &= Ax_{k}+Bu_{k}, \quad x_{0} \in \mathbb{R}^n, \\
y_{k} &= Cx_{k}+Du_{k},
\end{aligned}\label{sys}
\end{equation}
where $x_{k} \in \mathbb{R}^{n}$ is the state, $u_{k} \in \mathbb{R}^{m}$ is the input and $y_{k} \in \mathbb{R}^{p}$ is the output. Throughout the paper, the matrices $(A,B,C,D)$ are unknown, while the dimensions of the system $n,m,p$ are known.

\begin{definition} \cite{willems2005note} 
The input signal $\{u_{k}\}_{k=0}^{N-1}$ with $u_{k}\in\mathbb{R}^{m}$ is persistently exciting of order $L$ if $\text{rank}(H_{L}(u))=mL$.
\end{definition}

We consider the availability of a data set comprising $N_t$ experiments, where each experiment $i$ yields an input-output trajectory of length $N$, such that:
\begin{equation}
\mathcal{D} := \left\{ \{u_{k}^{(i)}, y_{k}^{(i)}\}_{k=0}^{N-1} \right\}_{i=0}^{ N_t-1}.
\end{equation}

\begin{remark}
The multi-experiment formulation naturally encapsulates the concept of fragmented data. If a temporal gap or sensor failure occurs in the middle of a continuous data collection process, the recorded data is simply divided into separate experiments. Consequently, the duration of the gaps between these fragments is entirely irrelevant to the proposed framework.
\end{remark}

\begin{assumption}\label{AssumpData}
For each experiment $i$, the input sequence $\{u_k^{(i)}\}_{k=0}^{N-1}$ is persistently exciting of order $L+n$.
\end{assumption}

Note that while we adopt this standard assumption for theoretical clarity, the multi-experiment setting of our proposed method inherently allows this requirement to be relaxed to a collective excitation property across all $N_t$ fragments.

\begin{definition}\cite{berberich2020trajectory}
A sequence $\{u_{k}^{(i)},y_{k}^{(i)}\}_{k=0}^{N-1}$ is a trajectory of an LTI system $G$ for a given experiment $i$ if there exists an initial state $x_0^{(i)}\in\mathbb{R}^{n}$ and a state sequence $\{x_{k}^{(i)}\}_{k=0}^{N}$ satisfying the state-space equations for $k=0,...,N-1$, where $(A,B,C,D)$ is a minimal realization of $G$.
\end{definition}

\begin{theorem}\cite{berberich2020trajectory}\label{thm:willems_lemma}
  Suppose $\{u_{k}^{(i)},y_{k}^{(i)}\}_{k=0}^{N-1}$ is a trajectory of an LTI system $G$ for a given experiment $i$ under Assumption \ref{AssumpData}. Then, a sequence $\{\overline{u}_{k},\overline{y}_{k}\}_{k=0}^{L-1}$ is a trajectory of $G$ if and only if there exists a vector $\alpha^{(i)}\in\mathbb{R}^{N-L+1}$ such that
\begin{equation} \label{WL}
\mathbb{H}^{(i)} \alpha^{(i)} = z, \quad \mathbb{H}^{(i)} := \begin{bmatrix} H_{L}(u^{(i)}) \\ H_{L}(y^{(i)}) \end{bmatrix}, \ z := \begin{bmatrix} \overline{u} \\ \overline{y} \end{bmatrix}.
\end{equation}
\end{theorem}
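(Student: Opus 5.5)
The plan is to prove the two implications separately. Both rest on the standard decomposition of a depth-$L$ window of a trajectory into an initial-state part and a forced part. Stacking the state-space equations over $L$ steps, any length-$L$ trajectory with initial state $\overline{x}_0$ satisfies
\begin{equation*}
\begin{bmatrix} \overline{u} \\ \overline{y} \end{bmatrix} = \begin{bmatrix} 0 & I_{mL} \\ \mathcal{O}_L & \mathcal{T}_L \end{bmatrix} \begin{bmatrix} \overline{x}_0 \\ \overline{u} \end{bmatrix},
\end{equation*}
where $\mathcal{O}_L$ is the extended observability matrix and $\mathcal{T}_L$ is the block-Toeplitz matrix of Markov parameters built from $(A,B,C,D)$. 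Applying this to every column of the data Hankel matrices of experiment $i$ gives
\begin{equation*}
\mathbb{H}^{(i)} = \begin{bmatrix} 0 & I \\ \mathcal{O}_L & \mathcal{T}_L \end{bmatrix} \begin{bmatrix} X^{(i)} \\ H_L(u^{(i)}) \end{bmatrix},
\end{equation*}
with $X^{(i)} := [x_0^{(i)} \ \cdots \ x_{N-L}^{(i)}]$.

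\emph{Sufficiency.} Suppose $\mathbb{H}^{(i)}\alpha^{(i)} = z$. Each column of $\mathbb{H}^{(i)}$ is a length-$L$ trajectory, and the set of length-$L$ trajectories of an LTI system is a linear subspace. Hence $z$ is a trajectory of $G$, with initial state $X^{(i)}\alpha^{(i)}$.

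\emph{Necessity.} Take any trajectory $z$, with some initial state $\overline{x}_0$. By the factorization, it suffices to find $\alpha^{(i)}$ with $X^{(i)}\alpha^{(i)} = \overline{x}_0$ and $H_L(u^{(i)})\alpha^{(i)} = \overline{u}$. Such an $\alpha^{(i)}$ exists whenever the matrix
\begin{equation*}
\begin{bmatrix} X^{(i)} \\ H_L(u^{(i)}) \end{bmatrix}
\end{equation*}
has full row rank $n+mL$. So the whole proof reduces to this rank condition.

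\emph{Rank condition (main obstacle).} This step uses Assumption \ref{AssumpData} (persistency of excitation of order $L+n$) together with minimality, i.e.\ controllability of $(A,B)$, following the original argument of Willems et al. I will argue by contradiction. Suppose a nonzero row vector $[\xi^\top \ \eta^\top]$, with $\eta = [\eta_0^\top \cdots \eta_{L-1}^\top]^\top$, annihilates the matrix above. Then $\xi^\top x_k + \sum_j \eta_j^\top u_{k+j} = 0$ for all admissible $k$.

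Next, I shift this relation forward in time and substitute $x_{k+1} = Ax_k + Bu_k$. This produces a family of left-annihilators of the deeper matrix
\begin{equation*}
\begin{bmatrix} X^{(i)} \\ H_{L+n}(u^{(i)}) \end{bmatrix}
\end{equation*}
(with $X^{(i)}$ truncated to the matching number of columns). These vectors have the form $[\xi^\top A^{j} \ \ \ast]$ with shifted copies of $\eta$.

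Finally, I use the Cayley--Hamilton theorem to combine $n+1$ of these annihilators so that the state part cancels. This leaves a nonzero vector in the left kernel of $H_{L+n}(u^{(i)})$, contradicting full row rank $m(L+n)$. The case $\eta = 0$ needs separate handling: it forces $\xi^\top x_k = 0$ along the trajectory, and together with controllability and persistency of excitation this yields $\xi = 0$. Tracking the block indices so that the combined vector is genuinely nonzero is the delicate bookkeeping, and I expect this to be the hardest part. Once the rank condition holds, the existence of $\alpha^{(i)}$ follows immediately, completing the equivalence.

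Since the theorem is cited from \cite{berberich2020trajectory}, which relies on \cite{willems2005note}, one may instead simply invoke that result per experiment $i$. The argument above is its self-contained justification.
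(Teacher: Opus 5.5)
Your argument is correct, but it does not follow the paper's route, because the paper gives no proof of this statement. It imports the result from \cite{berberich2020trajectory}, which rests on \cite{willems2005note}, and applies it experiment by experiment.

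You instead give a self-contained state-space proof. The factorization $\mathbb{H}^{(i)} = \begin{bmatrix} 0 & I \\ \mathcal{O}_L & \mathcal{T}_L \end{bmatrix}\begin{bmatrix} X^{(i)} \\ H_L(u^{(i)}) \end{bmatrix}$ gives sufficiency at once. It also reduces necessity to full row rank of $\begin{bmatrix} X^{(i)} \\ H_L(u^{(i)}) \end{bmatrix}$, which you obtain from persistency of excitation of order $L+n$, Cayley--Hamilton and controllability.

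The bookkeeping you flag closes cleanly, and the case $\eta = 0$ is not genuinely separate. Write $w_j$ for the annihilator obtained from the $j$-step shift, $j=0,\dots,n$. Take the single combination $\sum_{j=0}^{n}\alpha_j w_j$, where the $\alpha_j$ are the coefficients of the monic characteristic polynomial, and index its input blocks $0,\dots,L+n-1$.
\begin{itemize}
\item If $t$ is the largest index with $\eta_t \neq 0$, the input block in position $n+t$ equals exactly $\eta_t$. Persistency of excitation therefore forces $\eta = 0$.
\item When $\eta = 0$, the first $n$ input blocks form a unit-triangular system in $\xi^\top B, \xi^\top AB, \dots, \xi^\top A^{n-1}B$. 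So $\xi^\top[B \ AB \ \cdots \ A^{n-1}B]=0$, and controllability gives $\xi=0$.
\end{itemize}

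The citation buys brevity. Your route makes explicit which hypotheses are used: minimality enters only through controllability, and observability is not needed for this equivalence. The same factorization also gives $\operatorname{rank}\mathbb{H}^{(i)} = mL + \operatorname{rank}\mathcal{O}_L$. This exposes that the paper's later claim $\dim\mathcal{N}(\mathbb{H}^{(i)}) = pL-n$ additionally requires $L$ to be at least the observability index.
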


Theorem \ref{thm:willems_lemma} establishes that for each experiment $i$, the range space $\text{Im}(\mathbb{H}^{(i)})$ spans any $z$ generated by (\ref{sys}). Because the dynamics of the system imposes linear dependencies between the inputs and outputs, each $\mathbb{H}^{(i)}$ is row-rank deficient. 

To characterize the system invariants, rather than working with the range space, we focus on the orthogonal complement: the left null space $\mathcal{N}(\mathbb{H}^{(i)})$. By Theorem \ref{thm:willems_lemma}, a candidate vector $z \in \mathbb{R}^{(m+p)L}$ is a trajectory of (\ref{sys}) if and only if $\forall i \in \{0, 1, \dots, N_t-1\}$, it is strictly orthogonal to this left null space:
\begin{equation}
\mathbb{H}^{(i)} \alpha^{(i)} = z \iff v^{\top} z = 0, \quad \forall v \in \mathcal{N}(\mathbb{H}^{(i)}).
\end{equation}

This equivalence is a direct consequence of fundamental linear algebra principles—specifically, that the column space and left null space of a matrix are orthogonal complements—and its formal proof is therefore omitted. Since all inputs in each experiment are assumed to be persistently exciting, this left null space $\mathcal{N}(\mathbb{H}^{(i)})$ has an exact dimension of $pL - n$.

\begin{remark}
Under Assumption \ref{AssumpData}, each experiment fully captures the invariants of the underlying LTI system (\ref{sys}), which we characterize here by the left null space $\mathcal{N}(\mathbb{H}^{(i)})$. Consequently, all $N_t$ experiments share the same left null space.
\end{remark}

Let us define $\mathbf{M}_{\mathcal{D}}$, the aggregate correlation matrix across all experiments as follows:
\begin{equation} \label{eq:correlation_matrix}
\mathbf{M}_{\mathcal{D}} := \frac{1}{N_t}\sum_{i=0}^{N_t-1} \mathbb{H}^{(i)} (\mathbb{H}^{(i)})^\top.
\end{equation}

The symmetric positive semi-definite matrix $\mathbb{H}^{(i)} (\mathbb{H}^{(i)})^\top$ is chosen to directly target the left null space of $\mathbb{H}^{(i)}$.

\begin{lemma} \label{lem:noiseless_kernel}
Consider the LTI system (\ref{sys}). Under Assumption \ref{AssumpData}, a vector $v \in \mathbb{R}^{(m+p)L}$ satisfies the following condition:
\begin{equation}
    v^{\top} \mathbf{M}_{\mathcal{D}} v = 0, \label{vec} 
\end{equation}
if and only if $v$ lies in the intersection of all left null spaces in the experiments.
\end{lemma}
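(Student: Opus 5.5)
The plan is to expand the quadratic form $v^\top \mathbf{M}_{\mathcal{D}} v$ as a sum of squared norms and then use that a sum of nonnegative terms vanishes exactly when each term does. By the definition (\ref{eq:correlation_matrix}), for any $v \in \mathbb{R}^{(m+p)L}$ we have
\begin{equation*}
v^\top \mathbf{M}_{\mathcal{D}} v = \frac{1}{N_t}\sum_{i=0}^{N_t-1} v^\top \mathbb{H}^{(i)} (\mathbb{H}^{(i)})^\top v = \frac{1}{N_t}\sum_{i=0}^{N_t-1} \big\| (\mathbb{H}^{(i)})^\top v \big\|_2^2 .
\end{equation*}
Each summand is nonnegative, and $N_t \geq 1$ makes the factor $1/N_t$ strictly positive. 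Hence the whole expression is zero if and only if $\|(\mathbb{H}^{(i)})^\top v\|_2 = 0$ for every $i \in \{0,\dots,N_t-1\}$. That is, $v^\top \mathbb{H}^{(i)} = \mathbf{0}^\top$ for every $i$, which by the definition of the left kernel in the Notation means $v \in \bigcap_{i} \mathcal{N}(\mathbb{H}^{(i)})$.

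I will write the two directions separately. For ``if'': when $v$ lies in every $\mathcal{N}(\mathbb{H}^{(i)})$, each summand is zero, so the sum is zero. For ``only if'': when the sum is zero, nonnegativity forces each summand to be zero, and a zero Euclidean norm gives $(\mathbb{H}^{(i)})^\top v = 0$.

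There is no real obstacle. The only point to check is that Assumption \ref{AssumpData} is not needed for the equivalence itself, which is pure linear algebra. Its role is interpretive: by the preceding Remark, all $\mathcal{N}(\mathbb{H}^{(i)})$ coincide and have dimension $pL-n$. So I would close by noting that the intersection equals the common left null space, and that the kernel of $\mathbf{M}_{\mathcal{D}}$ (a symmetric PSD matrix, for which $v^\top \mathbf{M}_{\mathcal{D}} v = 0 \iff \mathbf{M}_{\mathcal{D}} v = 0$) therefore characterizes the system invariants.
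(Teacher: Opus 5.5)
Your proof is correct and matches the paper's argument: it rewrites $v^\top \mathbf{M}_{\mathcal{D}} v$ as $\frac{1}{N_t}\sum_{i} \|(\mathbb{H}^{(i)})^\top v\|_2^2$ and uses the fact that a sum of nonnegative terms vanishes exactly when every term does, which gives both directions. Your observation that Assumption~\ref{AssumpData} plays no role in the equivalence itself is accurate, since the paper's proof does not use it either.
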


\begin{proof}
Please find the proof in the Appendix.
\end{proof}

As established in Lemma \ref{lem:noiseless_kernel}, finding system invariants using $\mathcal{D}$ is equivalent to finding a vector $v$ that lies in the null space of the symmetric positive semi-definite aggregate matrix $\mathbf{M}_{\mathcal{D}}$. Although computing this invariant vector is trivial with noiseless data, the presence of measurement noise structurally obscured the system invariants.

\section{Problem Formulation}

In real-world scenarios, measurements are inevitably corrupted by noise. We consider that the true inputs and outputs are affected by additive noise, yielding the available measurements $\tilde{u}_{k}, \tilde{y}_{k}$, defined as follows:
\begin{equation}\label{noisemeas}
\tilde{u}_{k} = u_{k} + \eta^u_{k}, \quad \tilde{y}_{k} = y_{k} + \eta^y_{k},
\end{equation}
where $\eta^u_{k} \in \mathbb{R}^{m}$ and $\eta^y_{k} \in \mathbb{R}^{p}$. Therefore, the available large fragmented noisy data set is defined as follows: 
\begin{equation} \label{eq:noisy_M}
\tilde{\mathcal{D}} := \{ \{\tilde{u}_{k}^{(i)}, \tilde{y}_{k}^{(i)}\}_{k=0}^{N-1} \}_{i=0}^{ N_t-1}.
\end{equation}

\begin{assumption}\label{noise_bound}
The mutually independent noise sequences are i.i.d. over time and drawn from unknown distributions. Furthermore, it is assumed that all statistical moments up to the fourth order exist and are finite. The true input-output data are bounded; that is, there exist $J_{u}, J_{y}$ such that $\|u\| \leq J_{u}, \|y\| \leq J_{y}$.
\end{assumption}

As will be demonstrated, Assumption \ref{noise_bound} is necessary to theoretically guarantee the convergence of the proposed algorithm. Additive noise obscures the left null space of the underlying system and motivates the following problem.

\begin{problem}\label{prob:1}
Under Assumptions \ref{AssumpData} and \ref{noise_bound}, given the noisy dataset $\tilde{\mathcal{D}}$ and without prior knowledge of the noise distributions, recover $\mathcal{N}(\mathbb{H}^{(i)})$.
\end{problem}

\section{Estimation of System Invariants from Large Noisy Fragmented Dataset}
To address Problem \ref{prob:1}, we seek to establish a formal mathematical relationship connecting $\mathbf{M}_\mathcal{D}$ with the available noisy data.

\begin{remark}
For simplicity of presentation, we assume that the input and output noise distributions share identical first and second moments, denoted as $m_1$ and $m_2$, respectively. However, this framework naturally extends to the general case where the input and output measurements are governed by distinct noise distributions with separate moments ($m_1^u, m_1^y$ and $m_2^u, m_2^y$).\label{rem:identical_moments}
\end{remark}

To this end, we begin by examining $\mathbb{E}[\mathbf{M}_\mathcal{D}] = \mathbf{M}_\mathcal{D}$. Since the underlying system and the applied input sequences are entirely deterministic, this relationship strictly simplifies to:
\begin{equation}\label{MD}
\mathbb{E}[\mathbf{M}_\mathcal{D}] = \frac{1}{N_t}\sum_{i=0}^{N_t-1} \mathbb{E}\big[\mathbb{H}^{(i)}(\mathbb{H}^{(i)})^\top\big] = \frac{1}{N_t}\sum_{i=0}^{N_t-1} \mathbb{H}^{(i)}(\mathbb{H}^{(i)})^\top.
\end{equation}

Let $N_c := N-L+1$ denote the number of columns in $\mathbb{H}^{(i)}$. We define the $r$-th row of the true and noisy stacked Hankel matrices for the $i$-th experiment, respectively, as:
\begin{equation}\label{law1}
h_r^{(i)} := \mathbb{H}^{(i)}_{[r, 1:N_c]}, \quad \tilde{h}_r^{(i)} = h_r^{(i)} + \eta_r^{(i)},
\end{equation}
where $\eta_r^{(i)} \in \mathbb{R}^{1 \times N_c}$ represents a noise row vector whose elements consist of i.i.d. measurement noise corrupting the true sequence $h_r^{(i)}$, with first and second moments $m_1$ and $m_2$ (as established in Remark \ref{rem:identical_moments}). The scalar elements of the matrix $\mathbb{H}^{(i)}(\mathbb{H}^{(i)})^\top$ are dictated by the inner products of these rows (i.e., multiplying the $r$-th row of $\mathbb{H}^{(i)}$ by the $s$-th column of $(\mathbb{H}^{(i)})^\top$). Therefore, the following holds:
\begin{equation}\label{rule1}
h_r^{(i)} = \mathbb{E}[h_r^{(i)}], \quad h_r^{(i)} (h_s^{(i)})^\top = \mathbb{E}[h_r^{(i)} (h_s^{(i)})^\top].
\end{equation}

By substituting the relationship $h_r^{(i)} = \tilde{h}_r^{(i)} - \eta_r^{(i)}$, we can explicitly compute the true row and expand the uncorrupted inner product $h_r^{(i)} (h_s^{(i)})^\top$ as follows:
\begin{equation} \label{eq:moments_expansion}
\begin{aligned}
h_r^{(i)} &= \mathbb{E}[\tilde{h}_r^{(i)}] - m_1 \mathbf{1}_{N_c}^\top, \\
h_r^{(i)} (h_s^{(i)})^\top &= \mathbb{E}\left[(\tilde{h}_r^{(i)} - \eta_r^{(i)})(\tilde{h}_s^{(i)} - \eta_s^{(i)})^\top\right].
\end{aligned}
\end{equation}

By expanding the terms in (\ref{eq:moments_expansion}) using the aforementioned expectation rules and substituting the results into in (\ref{MD}), we can explicitly express the elements of the true $\mathbf{M}_\mathcal{D}$ strictly as a function of the expected noisy measurements and the unknown noise moments:

\begin{equation} \label{eq:matrix_reconstruct}
\begin{split}
&(\mathbf{M}_{\mathcal{D}})_{(r,s)} = \\
&\quad \begin{cases} 
\begin{aligned}
&\frac{1}{N_t} \sum_{i=0}^{N_t-1} \mathbb{E}\big[\tilde{h}_r^{(i)} (\tilde{h}_r^{(i)})^\top\big] \\
&\quad - 2 m_1 \Big( \frac{1}{N_t} \sum_{i=0}^{N_t-1} \mathbb{E}\big[\tilde{h}_r^{(i)}\big] \Big) \mathbf{1}_{N_c} \\
&\quad + N_c(2m_1^2 - m_2),
\end{aligned} & \text{if } r = s \\[4ex]
\begin{aligned}
&\frac{1}{N_t} \sum_{i=0}^{N_t-1} \mathbb{E}\big[\tilde{h}_r^{(i)} (\tilde{h}_s^{(i)})^\top\big] \\
&\quad - m_1 \Big( \frac{1}{N_t} \sum_{i=0}^{N_t-1} \mathbb{E}\big[\tilde{h}_r^{(i)} + \tilde{h}_s^{(i)}\big] \Big) \mathbf{1}_{N_c} \\
&\quad + N_c m_1^2,
\end{aligned} & \text{if } r \neq s
\end{cases}
\end{split}
\end{equation}

As shown in (\ref{eq:matrix_reconstruct}), the element-wise reconstruction of $\mathbf{M}_{\mathcal{D}}$ depends on the expected values of the noisy measurements and moments of the noise, which are inherently inaccessible in practice. To bridge this gap, we first evaluate these expected values using empirical sample means computed across all $N_t$ experiments. For the corresponding row indices $r$ and $s$, these sample means are:
\begin{equation}\label{empmeans}
\frac{1}{N_t} \sum_{i=0}^{N_t-1} \tilde{h}_r^{(i)}, \quad \frac{1}{N_t} \sum_{i=0}^{N_t-1} \tilde{h}_r^{(i)} (\tilde{h}_s^{(i)})^\top.
\end{equation}

By directly substituting the expectations in (\ref{eq:matrix_reconstruct}) with their corresponding empirical sample means (\ref{empmeans}), we define the empirical estimator matrix, denoted as $\hat{\mathbf{M}}_\mathcal{D}$, as follows:

\begin{equation} \label{eq:M_hat_def}
\begin{split}
&(\hat{\mathbf{M}}_{\mathcal{D}})_{(r,s)} := \\
&\begin{cases} 
\begin{aligned}
&\frac{1}{N_t} \sum\nolimits_{i=0}^{N_t-1} \tilde{h}_r^{(i)} (\tilde{h}_r^{(i)})^\top \\[1ex]
&\quad - 2m_1 \Big( \frac{1}{N_t} \sum\nolimits_{i=0}^{N_t-1} \tilde{h}_r^{(i)} \Big)\mathbf{1}_{N_c} \\[1ex]
&\quad + N_c(2m_1^2 - m_2),
\end{aligned} & \text{if } r = s \\[4ex]
\begin{aligned}\\
&\frac{1}{N_t} \sum\nolimits_{i=0}^{N_t-1} \tilde{h}_r^{(i)} (\tilde{h}_s^{(i)})^\top \\[1ex]
&\quad - m_1 \Big( \frac{1}{N_t} \sum\nolimits_{i=0}^{N_t-1} \big( \tilde{h}_r^{(i)} + \tilde{h}_s^{(i)} \big) \Big)\mathbf{1}_{N_c} \\[1ex]
&\quad + N_c m_1^2,
\end{aligned} & \text{if } r \neq s
\end{cases}
\end{split}
\end{equation}

The following theorem formally establishes that as the number of experiments $N_t \to \infty$, our estimator matrix $\hat{\mathbf{M}}_\mathcal{D}$ rigorously converges to the underlying matrix $\mathbf{M}_\mathcal{D}$.

\begin{theorem} \label{thm:lln_convergence}
Under Assumptions \ref{AssumpData} and \ref{noise_bound}, by the Law of Large Numbers \cite{grimmett2020probability}, the empirical estimator $\hat{\mathbf{M}}_\mathcal{D}$ converges entry-wise almost surely to $\mathbf{M}_\mathcal{D}$ as the number of experiments $N_t \to \infty$, with a convergence rate of $\mathcal{O}(1/N_t)$. That is:
\begin{equation}
\lim_{N_t \to \infty} \left( \hat{\mathbf{M}}_\mathcal{D} - \mathbf{M}_{\mathcal{D}} \right) = \mathbf{0},\quad\text{a.s.}
\end{equation}
\end{theorem}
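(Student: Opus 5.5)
The approach is to show that each entry of $\hat{\mathbf{M}}_\mathcal{D}$ is an unbiased sample average of terms that are independent across experiments, with uniformly bounded variance. A strong law of large numbers for independent, non-identically distributed summands then gives almost sure convergence, and a direct variance computation gives the $\mathcal{O}(1/N_t)$ rate.

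\textbf{Step 1: reduce to an average of zero-mean terms.} Fix an entry $(r,s)$ and write $(\hat{\mathbf{M}}_\mathcal{D})_{(r,s)} = \frac{1}{N_t}\sum_{i=0}^{N_t-1} X_{rs}^{(i)}$. Here $X_{rs}^{(i)}$ is the bracketed quantity in (\ref{eq:M_hat_def}) evaluated on experiment $i$ alone:
\[
X_{rs}^{(i)} := \tilde{h}_r^{(i)}(\tilde{h}_s^{(i)})^\top - m_1(\tilde{h}_r^{(i)}+\tilde{h}_s^{(i)})\mathbf{1}_{N_c} + c_{rs},
\]
with $c_{rs}$ the appropriate constant, i.e.\ $N_c m_1^2$ or $N_c(2m_1^2-m_2)$. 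By (\ref{eq:matrix_reconstruct}), $\frac{1}{N_t}\sum_i \mathbb{E}[X_{rs}^{(i)}] = (\mathbf{M}_\mathcal{D})_{(r,s)}$. Hence
\[
(\hat{\mathbf{M}}_\mathcal{D}-\mathbf{M}_\mathcal{D})_{(r,s)} = \frac{1}{N_t}\sum_{i} Z^{(i)}, \qquad Z^{(i)} := X_{rs}^{(i)}-\mathbb{E}[X_{rs}^{(i)}].
\]
Each $Z^{(i)}$ depends only on the noise of experiment $i$. Reading Assumption \ref{noise_bound} as independence of the noise across experiments as well as over time, the $Z^{(i)}$ are mutually independent and have zero mean. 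They are not identically distributed, because $h_r^{(i)}$ varies with $i$.

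\textbf{Step 2: bound the variances uniformly in $i$.} Substitute $\tilde h = h+\eta$ and expand $Z^{(i)}$. It is a sum of three kinds of term: linear terms $h_r^{(i)}(\eta_s^{(i)})^\top$, $\eta_r^{(i)}(h_s^{(i)})^\top$, $m_1\eta_r^{(i)}\mathbf{1}$; the quadratic term $\eta_r^{(i)}(\eta_s^{(i)})^\top$; and their centerings.

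The linear terms have variance at most a constant times $N_c\max(J_u,J_y)^2 m_2$, using the boundedness of the true data. The quadratic term needs care, because Hankel rows overlap: $\eta_r$ and $\eta_s$ may share noise samples shifted in position, so its summands are not independent. Its second moment is a finite sum of at most $N_c^2$ products $\mathbb{E}[\eta_a\eta_b\eta_c\eta_d]$. Each of these is bounded by the fourth moment through Hölder's inequality, and the fourth moment is finite by Assumption \ref{noise_bound}. This yields $\mathrm{Var}(Z^{(i)}) \le K_{rs}$, with $K_{rs}$ depending on $N_c$, $J_u$, $J_y$ and the first four noise moments, but not on $i$. I expect this bookkeeping of overlapping noise terms to be the main obstacle. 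I would handle it crudely, using $(\sum_{j=1}^{q} a_j)^2\le q\sum_{j=1}^{q} a_j^2$ for a sum of $q$ terms, rather than computing exact covariances.

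\textbf{Step 3: conclude.} Since $\sum_{i\ge1}\mathrm{Var}(Z^{(i)})/i^2 \le K_{rs}\sum_i 1/i^2<\infty$, Kolmogorov's strong law for independent summands (see \cite{grimmett2020probability}) gives $\frac{1}{N_t}\sum_i Z^{(i)}\to 0$ almost surely. There are finitely many entries $(r,s)$, so the entrywise statement holds simultaneously on a common probability-one event.

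For the rate, independence and zero mean give
\[
\mathbb{E}\big[(\hat{\mathbf{M}}_\mathcal{D}-\mathbf{M}_\mathcal{D})_{(r,s)}^2\big] = \frac{1}{N_t^2}\sum_i\mathrm{Var}(Z^{(i)}) \le \frac{K_{rs}}{N_t}.
\]
So the mean-squared error of each entry is $\mathcal{O}(1/N_t)$, and I would interpret the stated rate in this sense. Equivalently, the typical deviation is $\mathcal{O}_p(1/\sqrt{N_t})$ by Chebyshev's inequality.
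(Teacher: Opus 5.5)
Your proof is correct, and its almost-sure part is the paper's argument. It rests on independence across experiments, a variance bound uniform in $i$ from bounded true data and finite fourth noise moments, and Kolmogorov's strong law for independent, non-identically distributed summands, applied to finitely many entries.

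You differ from the paper in two respects. First, you fold each entry into a single per-experiment summand $X_{rs}^{(i)}$, which (\ref{eq:matrix_reconstruct}) centers exactly. The paper instead applies the strong law separately to the sample means $\frac{1}{N_t}\sum_i \tilde{h}_r^{(i)}$ and $\frac{1}{N_t}\sum_i \tilde{h}_r^{(i)}(\tilde{h}_s^{(i)})^\top$, then appeals to the affine dependence of $\hat{\mathbf{M}}_\mathcal{D}$ on them. You also actually justify the uniform variance bound (overlapping Hankel noise entries, H\"older on fourth moments), which the paper only asserts.

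Second, you obtain and interpret the rate differently. The paper uses the maximal inequality behind Kolmogorov's strong law, a H\'ajek--R\'enyi bound with weights $1/q$. For fixed $\epsilon$ this gives $\mathrm{Prob}\{\sup_{N_t\ge J}|\cdot|>\epsilon\}\le \frac{\sigma_{r,s}^2}{\epsilon^2}(\frac{1}{J}+\sum_{q>J}q^{-2})=\mathcal{O}(1/J)$. So the paper's $\mathcal{O}(1/N_t)$ is the decay of the probability that the error \emph{ever} exceeds $\epsilon$ after $J$ experiments.

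Your bound $\mathbb{E}[(\hat{\mathbf{M}}_\mathcal{D}-\mathbf{M}_\mathcal{D})_{(r,s)}^2]\le K_{rs}/N_t$ gives, via Chebyshev, the same order only at a single fixed $N_t$. It is therefore weaker as a tail statement. On the other hand, it makes explicit that the error itself is of size $\mathcal{O}_p(1/\sqrt{N_t})$, which the paper's wording leaves ambiguous. If you want the paper's uniform version, plug your $K_{rs}$ into the same H\'ajek--R\'enyi inequality for the centered $Z^{(i)}$.
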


\begin{proof}
    Please find the proof in the Appendix.
\end{proof}

Following Theorem \ref{thm:lln_convergence}, as $N_t \to \infty$, the system invariants (Problem \ref{prob:1}) can be recovered numerically by evaluating the SVD of $\hat{\mathbf{M}}_\mathcal{D}$ at the noise moments $(m_1^*, m_2^*)$. By defining a sufficiently small threshold $\epsilon_\sigma > 0$, the singular vectors associated with the minimum singular values below $\epsilon_\sigma$ can potentially span the underlying left null space. However, since $(m_1^*, m_2^*)$ are strictly unknown in practice, we formulate a procedure to simultaneously recover both the noise moments and the underlying left null space.

To achieve this, we divide the estimation into two distinct algorithms. Algorithm \ref{alg:matrix_construction} establishes the construction of matrix $\hat{\mathbf{M}}_{\mathcal{D}}(m_1,m_2)$ as follows: 

\begin{algorithm}[ht]
\caption{Construction of $\hat{\mathbf{M}}_{\mathcal{D}}(m_1,m_2)$}
\label{alg:matrix_construction}
\begin{algorithmic}[1]
\Require $\tilde{\mathcal{D}}, L, m, n, p$
\State Compute the empirical averages for all pairs of row indices $(r, s)$ using (\ref{empmeans}) from $\tilde{\mathcal{D}}$
\State Construct $\hat{\mathbf{M}}_\mathcal{D}(m_1,m_2)$ using (\ref{eq:M_hat_def})
\State \Return Parameterized $\mathbf{M}_\mathcal{D}(m_1, m_2)$ in moments
\end{algorithmic}
\end{algorithm}

\begin{remark}
Although satisfying the convergence condition in Theorem \ref{thm:lln_convergence} requires a large data set, the data aggregation step in Algorithm \ref{alg:matrix_construction} is computationally efficient. The extraction of empirical means and cross-correlations scales linearly with the number of experiments, $\mathcal{O}(N_t)$, and consists of highly parallelizable vector operations. Crucially, this $\mathcal{O}(N_t)$ aggregation is performed only once. 
\end{remark}

Since the moments of the noise, $m^*_1$ and $m^*_2$, are completely unknown, we perform a grid search over a range of these parameters. For each candidate pair $(m_1, m_2)$ in the considered grid, we evaluate the SVD of $\hat{\mathbf{M}}_{\mathcal{D}}(m_1,m_2)$. 

\begin{remark}
In the general case where input and output measurements are governed by distinct noise distributions, this two-dimensional parameter search naturally expands into a four-dimensional grid search over $(m_1^u, m_2^u, m_1^y, m_2^y)$.
\end{remark}

Building upon this, Algorithm \ref{alg:recovery} systematically explores the search grid to locate the noise parameters that drive the minimum singular value closest to zero, which leads to estimating the noise moments and the left null space of system invariants.
\begin{algorithm}[ht]
\caption{Estimation of System Invariants}
\label{alg:recovery}
\begin{algorithmic}[1]
\State \textbf{Input:} $\hat{\mathbf{M}}_\mathcal{D}(m_1, m_2)$, grids $\mathcal{G}_{m_1}, \mathcal{G}_{m_2}$, threshold $\epsilon_\sigma$
\State Initialize $\mathcal{S} \leftarrow \emptyset$, and $\hat{V} \leftarrow \mathbf{0}\in\mathbb{R}^{(m+p)L \times (pL-n)}$
\For{$(m_1, m_2) \in \mathcal{G}_{m_1} \times \mathcal{G}_{m_2}$}
    \State Compute SVD of $\hat{\mathbf{M}}_\mathcal{D}(m_1, m_2)$ to obtain singular values $\Sigma$ and associated singular vectors $V$
    \If{exactly $pL - n$ minimum singular values in $\Sigma$ are $< \epsilon_\sigma$}
        \State Extract $pL - n$ singular vectors into $\hat{V}$
        \State $\mathcal{S} \leftarrow \mathcal{S} \cup \{(m_1, m_2, \hat{V})\}$
    \EndIf
\EndFor
\State Extract the tuple $(m_1^*, m_2^*, \hat{V})$ from $\mathcal{S}$ with the smallest minimum singular value
\State \textbf{Output:} $m_1^*, m_2^*$ and estimated left null space basis $\hat{V}$
\end{algorithmic}
\end{algorithm}

As described in Algorithm \ref{alg:recovery}, as $N_t \to \infty$, multiple candidate pairs $(m_1, m_2)$ can yield a minimum singular value equal to zero. In practice, the tolerance threshold $\epsilon_\sigma$ is chosen to be as small as possible to strictly separate singular values closest to zero. To isolate the best possible estimate of the system invariants, the algorithm first filters the set of candidates by enforcing the known theoretical nullity of the LTI system. Among these valid candidates, the algorithm explicitly selects the pair with the smallest minimum singular value.

\begin{remark}
The computational complexity of Algorithm \ref{alg:recovery} is strictly independent of the dataset size $N_t$. For search grids with $N_g$ evaluation points per dimension, the algorithm computes the minimum singular value of the fixed $(m+p)L \times (m+p)L$ matrix $\hat{\mathbf{M}}_\mathcal{D}$ at most $N_g^2$ times. Because the data aggregation over $N_t$ experiments is precomputed in Algorithm \ref{alg:matrix_construction}, the overall time complexity of Algorithm \ref{alg:recovery} is bounded by $\mathcal{O}(N_g^2 ((m+p)L)^3)$.
\end{remark}

\section{Simulation Results}

To validate the effectiveness of the proposed algorithms, we consider a discrete-time LTI system ($n=3, m=2, p=3$) using the following state-space matrices:
\begin{equation*}
A = \left[\begin{smallmatrix} 0.8 & -0.1 & 0.0 \\ 0.1 & 0.7 & 0.1 \\ 0.0 & -0.2 & 0.6 \end{smallmatrix}\right], \ 
B = \left[\begin{smallmatrix} 1.0 & 0.0 \\ 0.0 & 1.0 \\ 0.5 & 0.5 \end{smallmatrix}\right], \ 
C = I_3, \ 
D = \mathbf{0}_{3 \times 2}.
\end{equation*}
These matrices are assumed to be unknown and are only used to generate data.

\textit{Data Generation:} We conducted $N_t = 10,000$ experiments, exciting the system with persistently exciting input sequences of length $N = 30$ and constructing Hankel matrices of depth $L = 2$. The input-output measurements are corrupted by additive i.i.d. noise with a mean $m_1 = 1.0$ and standard deviation $\sigma = 2.0$ (yielding $m_2 = 5.0$).

\textit{Simulation Environment}: The simulations were performed on a Mac Mini powered by an Apple M4 chip with 32 GB of RAM. The algorithms were implemented in Python.

\textit{Experimental Results:} Once the data set is collected, we employ Algorithm \ref{alg:matrix_construction} to construct $\hat{\mathbf{M}}_\mathcal{D}(m_1, m_2)$, which executes in $3.4661$ seconds under the simulation environment described above. To identify the unknown noise parameters, we first define the search grids $\mathcal{G}_{m_1}$ and $\mathcal{G}_{m_2}$ over the intervals $[0, 1.5]$ and $[2.5, 7]$ respectively, uniformly discretizing each into $200$ evaluation points. Subsequently, we execute the SVD-based grid search (Algorithm \ref{alg:recovery}) over the parameter space. The execution time for Algorithm \ref{alg:recovery} is $0.5070$ seconds. 

We then evaluate the minimum singular value landscape generated by this constrained grid search, depicted in Fig. \ref{fig:sv_heatmap}. As shown, the global minimum of this landscape yields the estimated noise moments $(m_1^*=1.0327, m_2^*=5.0779)$, achieving a near-zero singular value of $1.54 \times 10^{-5}$. This estimate closely coincides with the moments of the noise $(m_1=1.00, m_2=5.00)$. While there are multiple feasible points where $\sigma_{\min} < \epsilon_\sigma = 0.001$ (indicated by red dots), the absolute minimum successfully isolates the optimal parameter pair $(m_1^\ast,m_2^\ast)$.

\begin{figure}[t]
    \centering
    \includegraphics[width=0.5\columnwidth]{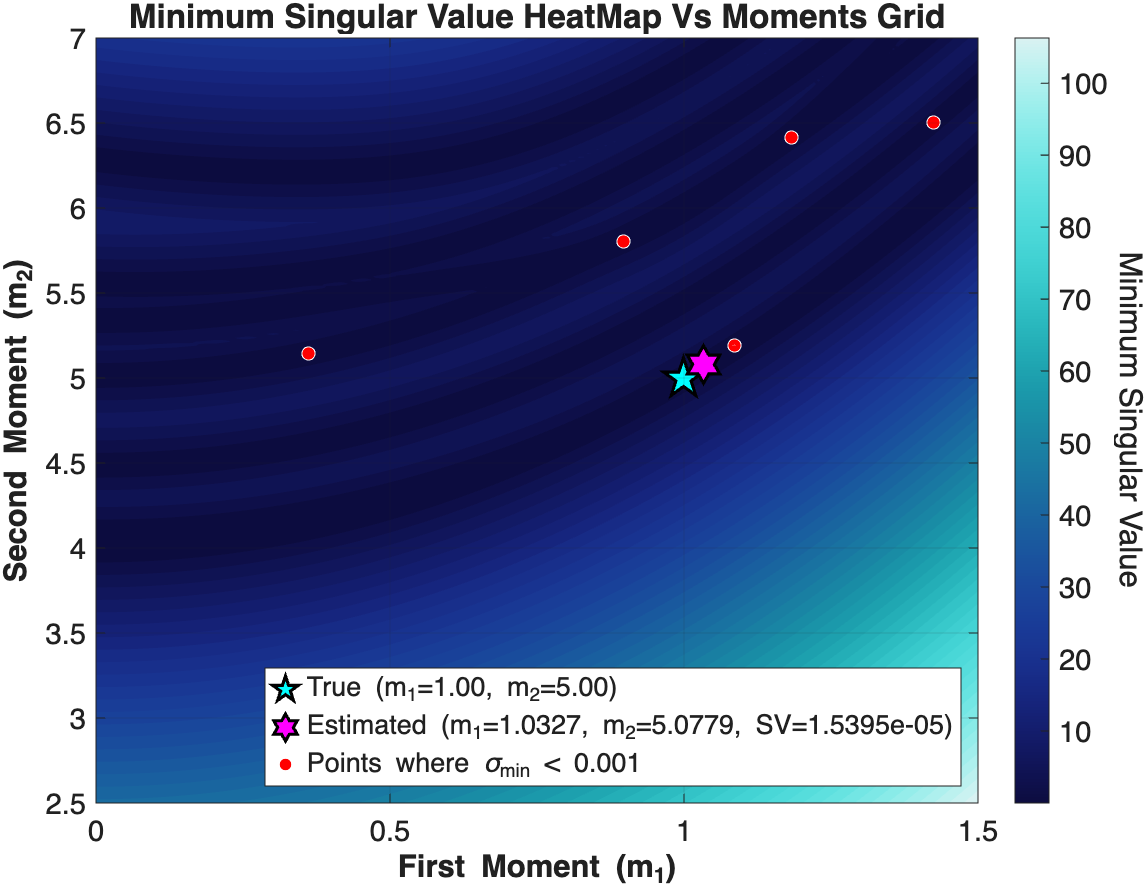}
    \caption{Heatmap of the minimum singular value ($\sigma_{\min}$). The red dots denote feasible points where $\sigma_{\min} < \epsilon_\sigma = 0.001$, with the absolute minimum (magenta star) successfully approximating the noise moments (cyan star).}
    \label{fig:sv_heatmap}
\end{figure}

To further validate this optimal estimate, we examine the numerical rank of $\mathbf{M}_\mathcal{D}(m_1, m_2)$ across the same search space, shown in Fig. \ref{fig:rank_heatmap}. For this system, the stacked Hankel matrix has $(m+p)L = 10$ block rows. Based on Willems' Fundamental Lemma, the theoretical noiseless rank dictates a complexity of $mL + n = 7$. As expected, both the noise parameters and our optimal estimated moments reside perfectly within this target rank-deficient basin (dark blue).

\begin{figure}[t]
    \centering
    \includegraphics[width=0.5\columnwidth]{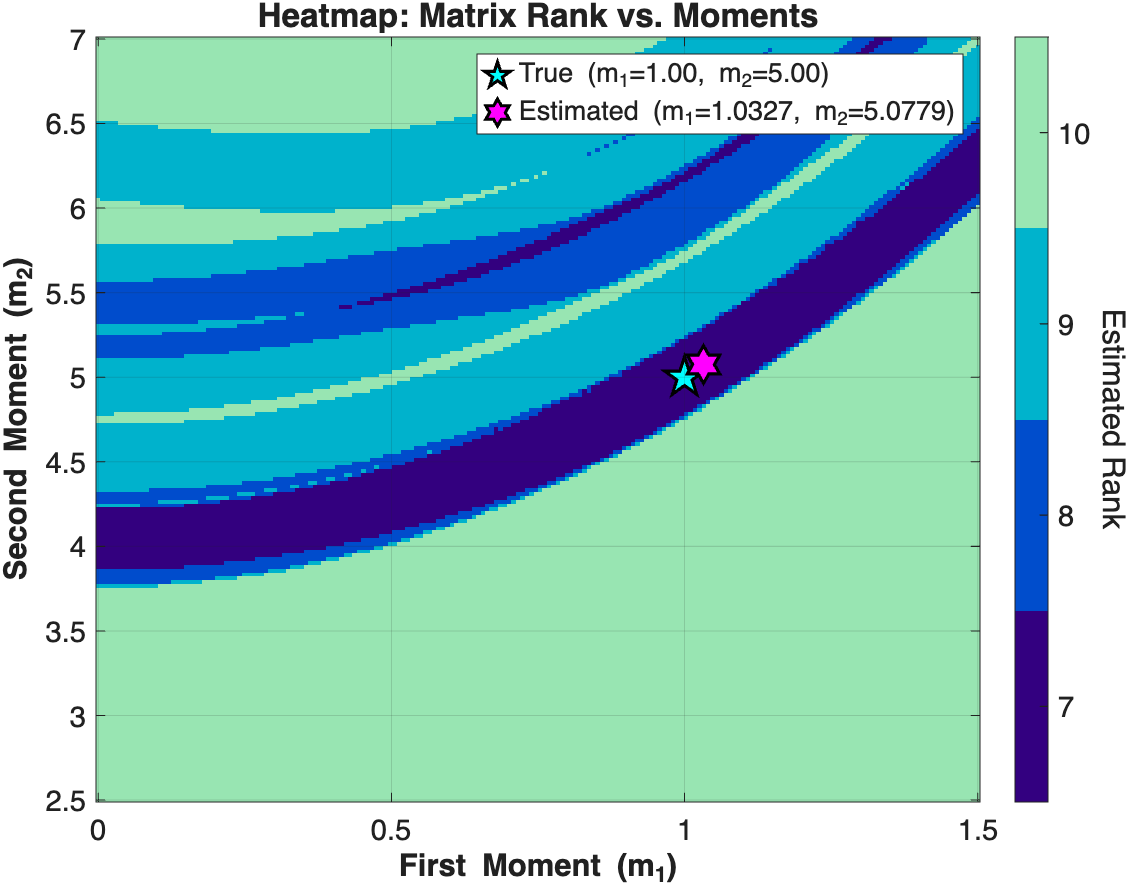}
    \caption{Heatmap of the numerical rank of $\hat{\mathbf{M}}_\mathcal{D}(m_1, m_2)$ across the search grid.}
    \label{fig:rank_heatmap}
\end{figure}

To mathematically verify that the recovered $pL-n = $3-dimensional null space represents the exact behavioral invariants of the underlying LTI system, we evaluated the geometric alignment between the estimated null space, $\hat{V}$ (using Algorithm \ref{alg:recovery}), and the underlying left null space across varying sample sizes $N_t$. 

Because the SVD yields arbitrary orthonormal bases for the identified subspaces, a direct element-wise comparison of the basis vectors is not viable. Instead, we utilize the principal angles to measure the geometric overlap between the subspaces. Let $V_{\text{true}} \in \mathbb{R}^{3 \times 10}$ denote a matrix whose rows form an orthonormal basis for the true left null space derived from the exact, noiseless system matrices. Similarly, let $\hat{V} \in \mathbb{R}^{3 \times 10}$ represent the orthonormal basis of the estimated left null space recovered by extracting singular vectors corresponding to the smallest singular values of our optimally corrected matrix $\hat{\mathbf{M}}_\mathcal{D}(m_1^*, m_2^*)$. The maximum subspace error angle is defined as: $\theta_{\max} = \arccos\left(\sigma_{\min}(V_{\text{true}} \hat{V}^\top)\right)$,
where $\sigma_{\min}(\cdot)$ denotes the minimum singular value of the resulting $3 \times 3$ inner product matrix. Since the rows of $V_{\text{true}}$ and $\hat{V}$ form orthonormal bases, the singular values of this inner product matrix strictly fall within $[0, 1]$, ensuring the domain of the $\arccos$ function is strictly valid.

\begin{figure}[t]
    \centering
    \includegraphics[width=0.5\columnwidth]{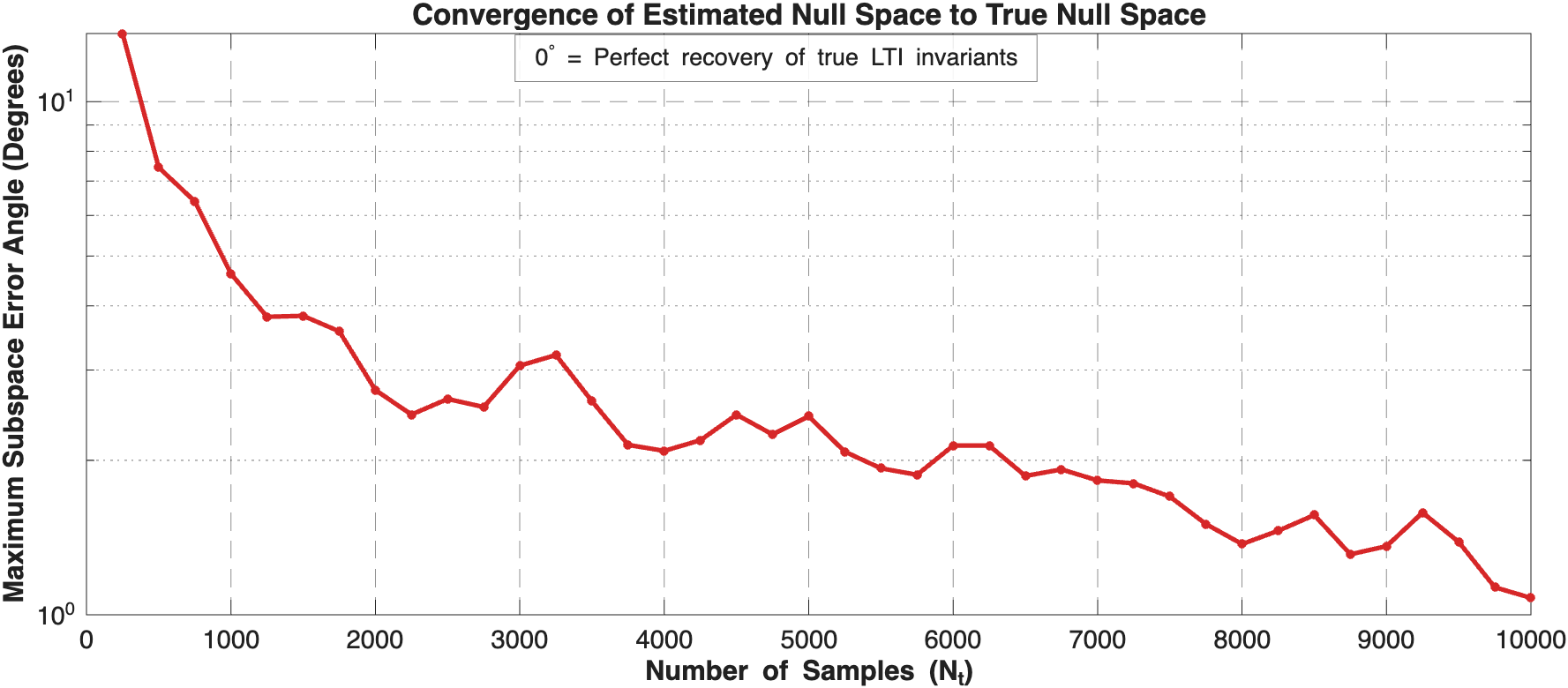}
    \caption{Logarithmic convergence of the maximum subspace error angle as the number of samples $N_t$ increases, demonstrating recovery of the underlying system invariants.}
    \label{fig:error_convergence}
\end{figure}

Fig. \ref{fig:error_convergence} demonstrates the evolution of this maximum subspace error angle as the number of experiments $N_t$ grows. Plotted on a logarithmic scale, the error trajectory exhibits a clear decay. As $N_t$ increases, the empirical noise moments rigorously converge to their expected values via the LLNs, causing the geometric gap between the true and estimated hyperplanes to strictly collapse.

\section{Conclusion}
In this paper, we enable the applicability of Willems' Fundamental Lemma to large, fragmented datasets corrupted by input-output noise, without requiring prior knowledge of the noise distributions. To achieve this, we first established a multi-experiment framework that naturally accommodates data gaps and formulated a relationship between the expected noisy measurements and noise moments. By leveraging the LLNs, we constructed an empirical estimator matrix that asymptotically converges to the uncorrupted correlation matrix. Consequently, our computationally lightweight algorithm simultaneously identifies the unknown noise moments and successfully estimates the left null space obscured by the noise. The simulation results demonstrated the effectiveness of the proposed method, confirming both its computational efficiency and its ability to accurately estimate the geometric subspace of the underlying LTI system.

%%%%%%%%%%%%%%%%%%%%%%%%%%%%%%%%%%%%%%%%%%%%%%%%%%%%%%%%%%%%%%%%%%%%%%%%%%%%%%%%
\section*{APPENDIX}\label{Appen}

\subsection{Proof of Lemma \ref{lem:noiseless_kernel}}
\begin{proof}
Substituting the definition of $\mathbf{M}_{\mathcal{D}} := \frac{1}{N_t}\sum_{i=0}^{N_t-1} \mathbb{H}^{(i)} (\mathbb{H}^{(i)})^\top$ into the quadratic form yields:
\begin{equation} \label{eq:quad_form}
\begin{aligned}
v^{\top} \mathbf{M}_{\mathcal{D}} v &= \frac{1}{N_t} \sum_{i=0}^{N_t-1} v^{\top} \mathbb{H}^{(i)} (\mathbb{H}^{(i)})^\top v \\
&= \frac{1}{N_t} \sum_{i=0}^{N_t-1} \left\| (\mathbb{H}^{(i)})^\top v \right\|_2^2.
\end{aligned}
\end{equation}

Assume $v$ lies in the intersection of all left null spaces. By definition, this means $v^{\top} \mathbb{H}^{(i)} = \mathbf{0}^{\top}$, which implies $(\mathbb{H}^{(i)})^\top v = \mathbf{0}$ for all experiments $i \in \{0, 1, \dots, N_t-1\}$. Substituting this directly into (\ref{eq:quad_form}) yields:
\begin{equation}
v^{\top} \mathbf{M}_{\mathcal{D}} v = \frac{1}{N_t} \sum_{i=0}^{N_t-1} \| \mathbf{0} \|_2^2 = 0.
\end{equation}

Conversely, assume $v^{\top} \mathbf{M}_{\mathcal{D}} v = 0$. From (\ref{eq:quad_form}), this requires:
\begin{equation}
\begin{aligned}
& \frac{1}{N_t} \sum_{i=0}^{N_t-1} \left\| (\mathbb{H}^{(i)})^\top v \right\|_2^2 = 0 \\
\implies & \left\| (\mathbb{H}^{(i)})^\top v \right\|_2^2 = 0 \implies v^{\top} \mathbb{H}^{(i)} = \mathbf{0}^\top, \quad \forall i.
\end{aligned}
\end{equation}
This establishes that $v$ belongs to the left null space of every individual experiment's Hankel matrix. Thus, $v \in \bigcap_{i=0}^{N_t-1} \mathcal{N}(\mathbb{H}^{(i)})$.
\end{proof}

\subsection{Proof of Theorem \ref{thm:lln_convergence}}
\begin{proof}
To prove the almost sure convergence of the empirical estimator $\hat{\mathbf{M}}_\mathcal{D}$, we analyze the empirical components that construct it. Specifically, for any row indices $r, s$, the estimator is built from the sample averages $\frac{1}{N_t}\sum_{i=0}^{N_t-1} \tilde{h}_r^{(i)}$ and $\frac{1}{N_t}\sum_{i=0}^{N_t-1} \tilde{h}_r^{(i)} (\tilde{h}_s^{(i)})^\top$.

Under Assumption \ref{noise_bound}, the input-output data are bounded, and measurement noise sequences possess finite moments up to the fourth order. Consequently, the random variables $\tilde{h}_r^{(i)}$ and $\tilde{h}_r^{(i)} (\tilde{h}_s^{(i)})^\top$ have variances that are uniformly bounded across all experiments. Let us define a uniform bound $\sigma_{r,s}^2$ such that for the scalar cross-correlation:
\begin{equation}
\text{Var}\left[ \tilde{h}_r^{(i)} (\tilde{h}_s^{(i)})^\top \right] \le \sigma_{r,s}^2 \quad \text{for } i = 0, 1, \dots, N_t-1.
\end{equation}

Because the dataset $\tilde{\mathcal{D}}$ consists of multiple experiments, the realizations for experiments $i$ and $j$ are strictly independent for all $i \neq j$. While they are not identically distributed (since the deterministic trajectories $h_r^{(i)}$ vary across experiments), the uniform bound on their variances strictly satisfies the condition $\sum_{i=1}^{\infty} \frac{\text{Var}(\cdot)}{i^2} < \infty$. Therefore, by Kolmogorov's Strong Law of Large Numbers (SLLN) for independent random variables \cite{grimmett2020probability}, the difference between the empirical sample average and the average of their true expected values converges almost surely to zero as $N_t \to \infty$:
\begin{equation}
\frac{1}{N_t} \sum_{i=0}^{N_t-1} \tilde{h}_r^{(i)} (\tilde{h}_s^{(i)})^\top - \frac{1}{N_t} \sum_{i=0}^{N_t-1} \mathbb{E}\left[\tilde{h}_r^{(i)} (\tilde{h}_s^{(i)})^\top\right] \to 0 \quad \text{a.s.}
\end{equation}
(An identical convergence holds for the first moment average $\frac{1}{N_t}\sum_{i=0}^{N_t-1} \tilde{h}_r^{(i)}$).

To establish the speed of convergence, following the results in the proof of Kolmogorov's SLLN \cite{grimmett2020probability}, for any $\epsilon > 0$ and integer $J$:
\begin{equation}
\begin{aligned}
&\text{Prob} \Bigg\{ \max_{N_t \ge J} \Bigg| \frac{1}{N_t} \sum_{i=0}^{N_t-1} \tilde{h}_r^{(i)} (\tilde{h}_s^{(i)})^\top \\
&\quad - \frac{1}{N_t} \sum_{i=0}^{N_t-1} \mathbb{E}\left[\tilde{h}_r^{(i)} (\tilde{h}_s^{(i)})^\top\right] \Bigg| > \epsilon \Bigg\} \\
&\le \frac{\sigma_{r,s}^2}{\epsilon^2} \left( \frac{1}{J} + \sum_{q \ge J+1} q^{-2} \right).
\end{aligned}
\end{equation}

Since the right-hand summation converges, the probability bound scales as $\mathcal{O}(1/J)$. Furthermore, because $\hat{\mathbf{M}}_\mathcal{D}$ is an affine combination of these convergent sample means, it asymptotically recovers the noiseless inner products established in Section IV, guaranteeing entry-wise almost sure convergence. Thus, we conclude:
\begin{equation}
\lim_{N_t \to \infty} \left( \hat{\mathbf{M}}_\mathcal{D} - \mathbf{M}_\mathcal{D} \right) = \mathbf{0} \quad \text{a.s.}
\end{equation}
with an asymptotic convergence rate of $\mathcal{O}(1/N_t)$.
\end{proof}
%%%%%%%%%%%%%%%%%%%%%%%%%%%%%%%%%%%%%%%%%%%%%%%%%%%%%%%%%%%%%%%%%%%%%%%%%%%%%%%%

\bibliographystyle{IEEEtran}
\bibliography{Reference}

\end{document}